\definecolor{iccvblue}{rgb}{0.21,0.49,0.74}
\theoremstyle{plain}
\newtheorem{theorem}{Theorem}[section]
\newtheorem{proposition}[theorem]{Proposition}
\theoremstyle{definition}
\theoremstyle{remark}
\title{Learning Null Geodesics for Gravitational Lensing Rendering\\ in General Relativity}
\author{Mingyuan Sun\textsuperscript{\rm 1}\quad Zheng Fang\textsuperscript{\rm 1,}\footnotemark[2]\quad Jiaxu Wang\textsuperscript{\rm 2}\quad Kunyi Zhang\textsuperscript{\rm 3}\quad Qiang Zhang\textsuperscript{\rm 2,4}\quad Renjing Xu\textsuperscript{\rm 2,}\setcounter{footnote}{1}\thanks{Corresponding authors.}\\
\textsuperscript{\rm 1}Northeastern University \quad
\textsuperscript{\rm 2}The Hong Kong University of Science and Technology (Guangzhou)\\
\textsuperscript{\rm 3}Sichuan University\quad
\textsuperscript{\rm 4}Beijing Innovation Center of Humanoid Robotics Co., Ltd.\\\\
{\small\href{https://myuansun.github.io/gravlensx}{\textcolor[HTML]{ED008A}{\texttt{myuansun.github.io/gravlensx}}}}
}
\begin{document}
\maketitle
\begin{abstract}
We present \texttt{GravLensX}, an innovative method for rendering black holes with gravitational lensing effects using neural networks. The methodology involves training neural networks to fit the spacetime around black holes and then employing these trained models to generate the path of light rays affected by gravitational lensing. This enables efficient and scalable simulations of black holes with optically thin accretion disks, significantly decreasing the time required for rendering compared to traditional methods. We validate our approach through extensive rendering of multiple black hole systems with superposed Kerr metric, demonstrating its capability to produce accurate visualizations with significantly $15\times$ reduced computational time. Our findings suggest that neural networks offer a promising alternative for rendering complex astrophysical phenomena, potentially paving a new path to astronomical visualization.
\end{abstract}

\vspace{-20pt}
\section{Introduction}
\label{sec:intro}
\vspace{-3pt}
General relativity~\citep{einstein1916grundlage}, formulated by Albert Einstein in 1915, fundamentally altered our understanding of gravity. Prior to this theory, gravity was viewed through the lens of Newtonian mechanics as a force acting at a distance between two masses. However, Einstein's theory reframed gravity as the warping of space and time, collectively known as spacetime, by mass and energy. In this framework, massive objects like stars and planets cause spacetime to curve, and this curvature dictates the motion of objects, including the paths of light rays. Einstein's field equations, the mathematical backbone of general relativity, describe how matter and energy determine the curvature of spacetime, providing a profound insight into the dynamics of the universe at both cosmic and quantum scales.

\begin{figure}[!t]
    \centering
    \includegraphics[width=\linewidth]{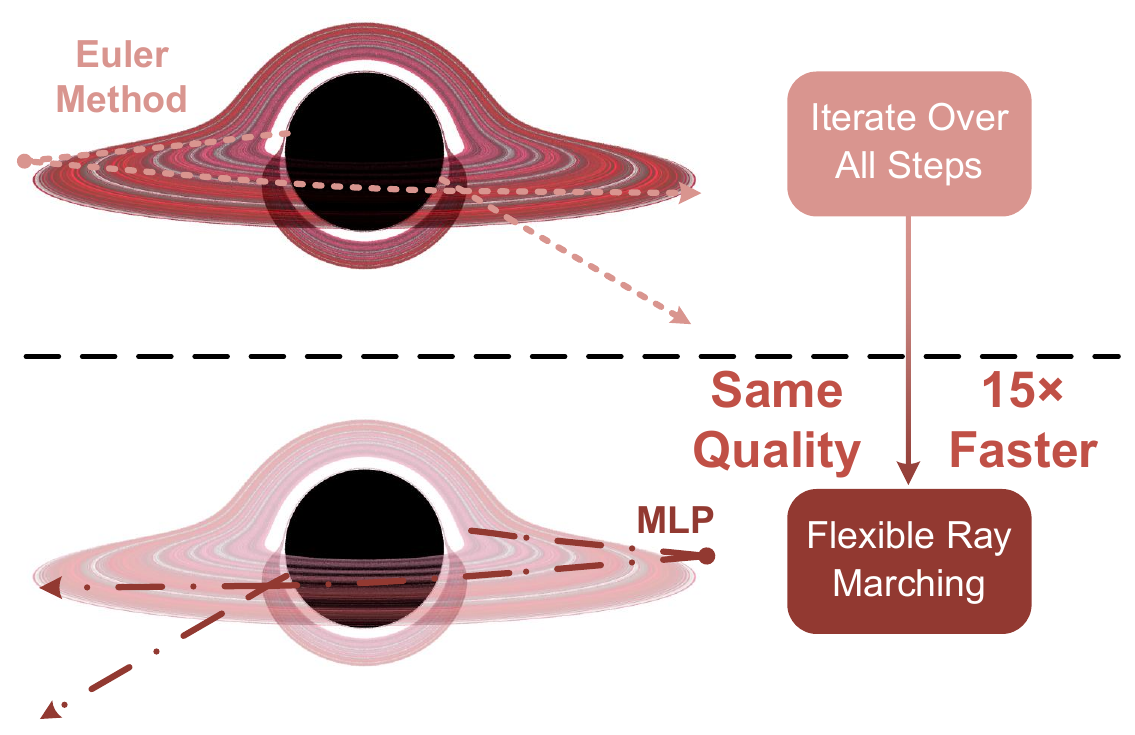}
    \caption{\textbf{(Top)} Traditional techniques for rendering black holes typically solve for geodesics through iterative Runge-Kutta method. \textbf{(Bottom)} Our approach, however, can determine any position of light rays in curved spacetime with just one forward pass, significantly lowering computational costs and enhancing adaptability to diverse point sampling methods in the rendering field.}\vspace{-12pt}
    \label{fig:intro}
\end{figure}

Gravitational lensing is one of the most compelling visual consequences of general relativity. When light from a distant object, such as a galaxy or a star, passes near a massive object, the curvature of spacetime bends the light's path. This bending can magnify, distort, or create multiple images of the background object, a phenomenon analogous to the way a glass lens bends and focuses light. Gravitational lensing has become a powerful tool in modern astronomy, allowing scientists to study objects too faint or too distant to be observed directly. Moreover, it provides a way to detect dark matter, measure the distribution of mass in the universe, and even observe exoplanets in distant star systems.

Gravitational lensing is especially dramatic around black holes due to their immense gravitational pull. When light passes close to a black hole, it can be bent to such an extent that it creates some of the most spectacular lensing effects observed in the universe. The movie Interstellar (2014), directed by Christopher Nolan, brought the depiction of black holes into popular culture in a visually stunning and scientifically informed manner, which is analyzed and reproduced by \citet{james2015gravitational}. To visualize a black hole with gravitational lensing effect, we need to solve the field equations of general relativity to determine the paths of light rays as they travel through the warped spacetime near the black hole. Despite the complexity of the Einstein field equations, there are several metrics that introduce additional constraints to simplify the spacetime around a single black hole, such as Schwarzschild metric, Kerr metric, Reissner-Nordstr\"om metric, etc. 

The rendering process of a black hole typically begins by defining a viewpoint in space, known as the camera's position. Subsequently, a virtual image plane is established in front of this viewpoint, and the initial directions of light rays are generated by casting rays from the camera through the image plane. These rays are iteratively traced through the spacetime over a series of discrete steps according to the specific metric, making the whole process extremely computationally expensive. \citet{james2015gravitational} leverages Schwarzschild metric in spherical coordinates to render a single black hole and simplifies the geodesic equation by an ingenious affine transformation. This approach significantly reduces the computational cost of ray tracing, making it feasible to render single black hole in real-time. \citet{gralla2020null} derive a semi-analytic solutions to the null geodesic equations in Kerr spacetime via elliptic integrals, enabling dramatically faster ray tracing for rotating black holes. However, when it comes to rotating black hole or multiple black holes, this simplification is no longer applicable, and the rendering process is still computationally expensive.

\vspace{-11pt}
\paragraph{Contribution} In this work, we propose a novel approach to render black holes with gravitational lensing effect by incorporating neural networks into ray tracing, namely \texttt{GravLensX}. Here, Neural networks are trained to fit the spacetime around black holes and conduct ray tracing, significantly reducing the computational cost of rendering. We demonstrate the effectiveness of our approach by rendering multiple black holes with optically thin accretion disks in Kerr metric. Though our approach cannot achieve real-time rendering, it reduces the computation time of rendering by an order of magnitude compared to traditional methods. This work opens up new possibilities for rendering black holes with lightweight neural networks, enabling more efficient and scalable simulations of black holes in various scenarios.
\vspace{-6pt}
\section{Related Work}
\subsection{Black Hole Rendering}
In general relativity, the curvature of spacetime is determined by the distribution of matter and energy, causing light to bend as it passes near massive objects like black holes—a phenomenon known as gravitational lensing. Solving the light path of photons near a black hole is one the most crucial part of black hole rendering. A black hole can be uniquely characterized by three fundamental properties: its mass, electric charge, and angular momentum. The Schwarzschild metric \citep{schwarzschild1916gravitationsfeld} is the simplest solution to the Einstein field equations, describing a non-rotating black hole characterized by its mass. A generalization of this solution is the Kerr metric \citep{kerr1963gravitational}, which accounts for a rotating black hole with both mass and angular momentum. \citet{newman1965metric} gives the solution to the Einstein-Maxwell equations in general relativity, which comprehensively describe the gravitational field outside a rotating, charged black hole.

Accretion flows of black holes are often classified by optical depth into two regimes. Optically thin disks can be rendered efficiently using semi-analytic geodesics, whereas optically thick disks demand a full radiative-transfer solution—including absorption, scattering, and Faraday rotation—typically via specialized codes such as ipole \citep{moscibrodzka2018ipole}. In this work, we do not model the disk's detailed optical or geometric properties, treating it as a simple thin texture.

Recent studies have employed the Schwarzschild metric \citep{riazuelo2019seeing, meseguer2023custom} and the Kerr metric \citep{james2015gravitational, meseguer2023custom}, in combination with the Runge-Kutta method, to numerically solve the trajectory of photons in the vicinity of a single black hole for rendering. Multiple black hole rendering, as a more challenging field, was first explored by \citet{bohn2015does}, which calculates the light path with SpEC~\footnote{\href{http://black-holes.org/SpEC.html}{http://black-holes.org/SpEC.html}} using the 3+1 decomposition, a general framework for describing any spacetime geometry. Since SpEC tool does not support GPU acceleration, the rendering process is quite time-consuming. As an alternative, \citet{combi2021superposed} proposed a superposed metric to approximate the spacetime of multiple black holes, and was further applied for analyzing Electromagnetic Signatures of an Analytical Mini-Disk Model~\citep{porter2024parameter}. \citet{levis2022gravitationally} introduced a gravitational lensing-based tomography approach to reconstruct black hole emission structures from multi-angle observational data. In this paper, we mainly leverage this superposed metric to construct multi-black-hole system. Although previous works prefer Runge-Kutta method to solve the geodesic equation. In our implementation, we find that the most widely used Runge-Kutta method, RK4, does not perform better than the Euler method at the same step size, but it costs significantly more time. Therefore, we adopt the Euler method to solve the geodesic equation for each light ray.
\vspace{-6pt}
\subsection{Physics-informed Neural Network} 
Physics-Informed Neural Network \citep{raissi2019physics} is a method in scientific machine learning addressing problems related to Partial Differential Equations (PDEs). PINNs integrate domain-specific knowledge in the form of physical laws, into the architecture and training of neural networks, ensuring that the outputs of the network not only fit the training data but also comply with the underlying physical principles. PINNs have been successfully applied to a wide range of problems in a wide range of fields that contain complex PDEs, such as fluid dynamics~\citep{raissi2019physics, eivazi2022physics, arthurs2021active}, solid mechanics~\citep{arthurs2021active, haghighat2021physics},  thermodynamics~\citep{cai2021physics, costabal2024delta}, refractive field~\citep{zhao2024single}, and electromagnetics~\citep{khan2022physics}. PINNs have also been used to compute black-hole quasinormal mode spectra by directly solving the linear perturbation equations around a Schwarzschild or Kerr background, including the Regge-Wheeler equation~\citep{ovgun2021quasinormal, cornell2022using} and Teukolsky equation~\citep{luna2023solving, cornell2024solving}. Regge-Wheeler equation and Teukolsky equation are respectively derived from Schwarzschild metric and Kerr metric, describe the evolution of perturbations of massless fields (gravitational waves, electromagnetic waves, or scalar fields) in the curved spacetime of a black hole. We extend the idea of PINNs to our neural network training, resulting in improved accuracy in geodesic fitting.
\vspace{-6pt}
\setlength{\textfloatsep}{5pt}
\begin{algorithm}[!t]
    \caption{Black Hole Rendering with Euler Method}
    \label{alg:euler_render}
    \renewcommand{\algorithmicrequire}{\textbf{Input:}}
    \renewcommand{\algorithmicensure}{\textbf{Output:}}
    \begin{algorithmic}[1]
        \REQUIRE $P^{\text{bh}}$, $M^{\text{bh}}$, $A^{\text{bh}}$, $l^{\text{out}}$, $l^{\text{in}}$, $\Delta\lambda$, $p^{\text{init}}$,$v^{\text{init}}$
        \ENSURE Color $C$    
        \STATE \# Initialize color, affine parameter, transmittance, etc.
        \STATE $C=[0, 0, 0]^\mathsf{T}$, $\lambda=0$, $T=1$
        \STATE $p= p_{init}$, $v= v_{init}$, $v_t=0$
        \WHILE{CheckInRegion$(p, l^{\text{out}}, l^{\text{in}})$}
            \STATE \# Update color
            \STATE $C', \sigma' = \operatorname{GetColor}(p)$
            \STATE $C\leftarrow C+C'\sigma'T\Delta\lambda$
            \STATE $T\leftarrow T\cdot\operatorname{exp}(-\sigma'\Delta\lambda)$
            \STATE \# Calculate g
            \STATE $g= I_{4 \times 4}$
            \FOR{($p^{\text{bh}}, m, a$) in ($P^{\text{bh}}, M^{\text{bh}}, A^{\text{bh}}$)}
                \STATE $(x, y, z) \leftarrow p - p^{\text{bh}}$
                \STATE $r\leftarrow\sqrt{\frac{x^2+y^2+z^2-a^2+\sqrt{\left(x^2+y^2+z^2-a^2\right)^2+4 a^2 z^2}}{2}}$
                \STATE $\ell\leftarrow\left[1, \frac{r x+a y}{r^2+a^2}, \frac{r y-a x}{r^2+a^2}, \frac{z}{r}\right]^\mathsf T$
                \STATE $g\leftarrow g+\frac{2mr^3}{r^4+a^2z^2}\ell\ell^\mathsf{T}$
            \ENDFOR
            \STATE \# Extend space coordinate to spacetime coordinate
            \STATE $u=\operatorname{Concat}(v_t, v)$, $q=\operatorname{Concat}(0, p)$, $\eta = [0, 0, 0]$
            \STATE $u_t\leftarrow$ Solve $g_{\mu \nu} u_{\mu} u_{\nu}=0$
            \FOR{$\mu=2$ to $4$}
            \STATE $\Gamma = \mathbf{0}_{4 \times 4}$
            \FOR{$\alpha, \beta = 1$ to $4$}
            \STATE $\Gamma_{\alpha\beta} = \frac{1}{2}\sum_{\mu\nu}g^{-1}_{\mu \nu}\left(\frac{\partial g_{\nu \alpha}}{\partial q^\beta}+\frac{\partial g_{\nu \beta}}{\partial q^\alpha}-\frac{\partial g_{\alpha \beta}}{\partial q^\nu}\right)$
            \ENDFOR
            \STATE $\eta[\mu-1]=-\sum_{\alpha\beta}u_{\alpha}u_{\beta} \Gamma_{\alpha\beta}$
            \ENDFOR
            \STATE \# Update position and velocity
            \STATE $p\leftarrow p+v\Delta\lambda$
            \STATE $v\leftarrow v+\eta\Delta\lambda$
            \STATE $v \leftarrow \text{Normalize}(v)$
        \ENDWHILE
    \end{algorithmic}
\end{algorithm}

\section{Method}
\vspace{-4pt}
Light goes along the geodesic in the curved spacetime near a black hole, and we propose to represent the geodesic implicitly by neural networks. In our framework, We first use Euler method to generate a series of data points along several randomly sampled geodesics (\cref{sec:bh_metric}), and then train neural networks to approximate the geodesic (\cref{sec:learn_geodesic}). Finally we build a ray tracing pipeline upon the trained neural networks to render the black holes (\cref{sec:ray_tracing}).
\vspace{-6pt}
\subsection{Classical Black Hole Rendering}
\label{sec:bh_metric}
\vspace{-4pt}
In this subsection, we introduce how we represent the spacetime around black holes and how we render images with classical method. We implement the Euler method to numerically solve the geodesic equation for each ray from the view point, which generates a series of data points along the geodesic, available for both rendering black holes and traning neural networks.

The detailed algorithm for rendering black holes with Euler method is shown in \cref{alg:euler_render}. $P^{\text{bh}}$, $M^{\text{bh}}$, $A^{\text{bh}}$ are the positions, masses, spin parameters of the black holes. We restrict the space to a spherical region centered at the origin with a radius of $l^{\text{out}}$. Once the distance of a ray to any black hole is smaller than $l^{\text{in}}$, we consider that the light ray has fallen into the black hole. $\Delta\lambda$ is the step size of the affine parameter. $p_{init}$ and $v_{init}$ are the position of the view point and the initial direction of the ray, respectively. A detailed introduction of the black hole metric we leverage is presented in Appendix B. We employ the classic volume rendering pipeline~\citep{volume_1984} to visualize the black hole that assumes a constant-speed ray for uniform spatial sampling. However, due to the variability of light speed (w.r.t. the affine parameter $\lambda$) near a black hole, evenly sampling in the affine parameter results in non-uniform spatial distribution. To address this, we normalize the ray velocity with respect to $\lambda$ at each step as can be seen in line 26 of \cref{alg:euler_render}, ensuring a constant light speed of $1$ and maintaining a direct proportionality between spatial distance along the ray and the affine parameter difference.

The rendering results of the Euler method are highly sensitive to the choice of $\Delta\lambda$. A smaller $\Delta\lambda$ improves the accuracy of the rendering but demands significantly more computational resources. Therefore, using an efficient method to solve the geodesic equation is crucial for rendering black holes effectively.
\vspace{-6pt}
\subsection{Learning Geodesics}
\label{sec:learn_geodesic}
In volume rendering, rays are casted from the viewpoint at a constant speed, and the color of each pixel is determined by the accumulated color along the ray. The ray casting equation can be written as
\vspace{-6pt}\begin{equation}
    \label{eq:ray_casting}
    r(t)=p+t\cdot d.
    \vspace{-6pt}
\end{equation}
Here, $p$ represents the position of the viewpoint, and $d$ denotes the direction of the ray, which can also be interpreted as the initial speed of light. This equation allows us to determine the position along the ray at any time $t$, facilitating more effective point sampling. When it comes to curved spacetime, the linearity of \cref{eq:ray_casting} is no longer valid, and time $t$ is no longer proper to measure the distance along the ray. Instead, we use the affine parameter $\lambda$ to parameterize the geodesic. Let us define
\vspace{-6pt}\begin{equation}
    \label{eq:ray_casting_geodesic}
    f(p, d, \lambda)=p+\int_{\lambda_0}^{\lambda}v(\lambda')\mathrm{d}\lambda',
    \vspace{-6pt}
\end{equation}
subject to $v(\lambda_0)=d$. This equation describes the position of a light ray given affine parameter $\lambda$, position $p$, and initial direction $d$. We proceed to approximate the function $f$ with a neural network in a data-driven manner. As $\lambda$ represents the distance of the output position to the input position along the curve, when $\lambda$ is large, even small variations in $p$ and $d$ can result in large changes in the output position. On the other hand, when $\lambda$ is small, small changes in $p$ and $d$ lead to only minor adjustments in the output position. This indicates that both low-frequency and high-frequency information need to be learned. Thus, we formulate $f$ as the composition of a trainable physics informed neural network $f_\theta$ and a position encoding function $\gamma$:
\vspace{-6pt}\begin{multline}
    \gamma(x) = \left(\sin(2^0 \pi x), \cos(2^0 \pi x), \ldots,\right. \\
    \left.\sin(2^{L-1} \pi x), \cos(2^{L-1} \pi x)\right),
\end{multline}
yielding a better representation of the both high and low frequency information in the input, following the Fourier feature mapping introduced by \citet{tancik2020fourier}. Then we predict the position of the light ray as
\vspace{-4pt}\begin{equation}
    \hat{p} = f_\theta\left(\gamma(p), \gamma(d), \lambda\right).
\end{equation}
The predicted position $\hat{p}$ is expected to be close to the actual position $p$ when the neural network is well-trained. We define the first loss function as the mean squared error between the predicted position and the actual position:
\vspace{-6pt}\begin{equation}
    \mathcal{L}_p = \frac{1}{N}\sum_{i=1}^{N}\left\|f_\theta\left(\gamma(p^{\text{init}}_i), \gamma(d^{\text{init}}_i), \lambda_i\right)-p^\lambda_i\right\|^2.
    \vspace{-6pt}
\end{equation}
Since our target is to approximate $f$ in \cref{eq:ray_casting_geodesic} with $f_\theta$, we also need to ensure that the velocity of the predicted position, i.e., $\frac{\mathrm{d}\hat{p}}{\mathrm{d}\lambda} = v(\lambda),$ is equivalent to the actual velocity. We define the second loss function as the mean squared error between the predicted velocity using auto differentiation and the actual velocity:
\vspace{-8pt}\begin{equation}
    \mathcal{L}_v = \frac{1}{N}\sum_{i=1}^{N}\left\|\frac{\mathrm{d}f_\theta\left(\gamma(p^{\text{init}}_i), \gamma(d^{\text{init}}_i), \lambda_i\right)}{\mathrm{d}\lambda}-v_i\right\|^2,
\end{equation}
which reveals the physics informed property of the network. The final loss function is formulated as a weighted sum of them:
\begin{equation}
    \mathcal{L} = \mathcal{L}_p + \alpha\mathcal{L}_v,
\end{equation}
where $\alpha$ is manually set as 800 to match the magnitude of $\mathcal{L}_p$. We train the network using the data produced in \cref{sec:bh_metric}. For each light ray, we sample $N_{p}$ points along the geodesic using the method described in \cref{alg:euler_render}. Each of these points consists of the initial position $p^{\text{init}}$, initial direction $v^{\text{init}}$, affine parameter $\lambda$, current position $p$, and current velocity $v$.

\begin{figure*}[!t]
    \centering
    \includegraphics[width=\linewidth]{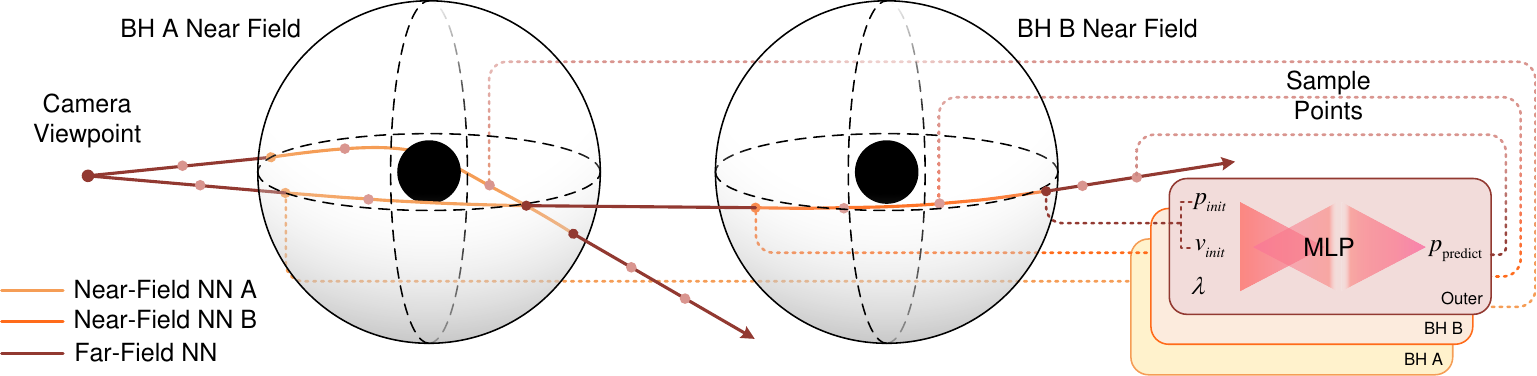}
    \caption{Our ray tracing framework with MLP. Every ray is represented by a couple of ray segments in the space, separated by the boundary of near-field regions. Each segment is represented by its starting position $p_{\text{init}}$, direction $v_{\text{init}}$ and $\lambda_{\text{end}}$.}
    \label{fig:ray_tracing}
    \vspace{-12pt}
\end{figure*}

Furthermore, we empirically find that approximating the whole spacetime with a single neural network yields poor performance. Thus, we divide the space into several regions and train a separate neural network for each region. We observed that the curvature of the spacetime near the black hole is significantly larger than that in the far field. Therefore, we divide the space into two kinds of regions: the near field and the far field. Near field refers to a spherical vicinity of the black hole, where the curvature is significant, while the far field refers to the region far away from the black hole, where the curvature is negligible. We train a neural network for each spherical vicinity of near field defined by
\vspace{-4pt}\begin{equation}
S_i = \left\{ p \mid \| p - c_i \| \leq R^{\text{bh}} + \epsilon \right\}, \quad i = 1, 2, \dots, N
\end{equation}
in which $c_i$ is the center of the $i$-th spherical vicinity, $R^{\text{bh}}$ is the spherical radius we set manually, and $N$ is the number of black holes. Also, we train the far-field region as another single neural network in a spherical region defined by 
\vspace{-6pt}\begin{multline}
    S_{\text{sky}} = \left\{ p \mid \| p \| \leq R^{\text{sky}} + \epsilon \right\}\\
    \setminus \bigcup_{i=1}^{N} \left\{ p \mid \| p - c_i \| < R^{\text{bh}} - \epsilon \right\},
\end{multline}
where $R^{\text{sky}}$ is the radius of the far field, also the radius of the sky sphere. We collect data saparately for each region and train the neural networks with the corresponding data. $\epsilon$ ensures the numerical stability near the boundaries, which is practically set to be $0.1$. Next we introduce our learning based ray tracing framework.
\vspace{-6pt}
\subsection{Ray Tracing in Curved Spacetime}
\label{sec:ray_tracing}
\vspace{-4pt}
The most outstanding advantage of representing geodesics with neural networks is that we can directly sample any point on the geodesic in one single forward propagation. This is particularly useful in volume rendering, where we can dynamically sample points along the geodesic. In our setting of rendering black holes, points with colors are concentrated on the accretion disk and the sky sphere. Thus, we propose an efficient ray tracing framework to locate points on the accretion plane and the sky sphere on all light rays casted from the viewpoint. The framework is illustrated in \cref{fig:ray_tracing}. Our rendering process can be divided into three steps: ray segments identification, color points sampling, and volume rendering.
\vspace{-12pt}
\paragraph{Ray Segments Identification} We first initialize the light rays from the viewpoint which can be placed at any position in the space. Next, we need to consider the potential events that may occur to the ray, such as intersecting the boundary of a near field, falling into a black hole~(intersecting a small-radius sphere centered at a black hole, denoted as in-black-hole boundary), or reaching the sky sphere. The goal is to quickly determine the next intersection point for the ray and update its state accordingly. For each ray, its current position and direction can be represented by its starting position $p_{\text{init}}$, $v_{\text{init}}$, an affine parameter $\lambda$, and an region index indicating the neural network belonging to which region is used. In each step, we approximate the change of the affine parameter $\Delta\lambda$ to update $\lambda$ as $\lambda \leftarrow \lambda + \Delta\lambda$, pushing the ray closer to its next intersection point while not surpassing the near-field boundary, in-black-hole boundary, or sky sphere a lot (less than $\epsilon$) to avoid inaccurate predictions of neural networks in incorrect regions. Once the ray goes into another region, we record its current state to represent a ray segment in the previous region. In \cref{sec:bh_metric} we normalize the speed of the light ray w.r.t. $\lambda$ as 1, which means that if we assume the light ray to go straight, the increment of $\lambda$ is equal to the distance between the current position and the next intersection point. Below, we provide a detailed illumination for both far-field and near-field conditions on the selection of $\Delta\lambda$. 

In far-field scenarios, Since the curvature of the spacetime is quite minor, and the path is nearly a straight line. Based on this fact, we directly cast a straight ray from the current position with its current direction, and calculate the distance $l^{\text{straight}}$ from its position to its next intersection point to a proximal boundary of the near field or sky sphere, whose radius satisfying $R^{\text{bh}}-\epsilon<R^{\text{prox}}<R^{\text{bh}}$ for near-field boundary and $R^{\text{sky}}<R^{\text{prox}}<R^{\text{bh}}+\epsilon$ for sky sphere boundary. We choose $\Delta\lambda$ as $l^{\text{straight}}$ in this case.

In near-field scenarios, the strong influence of the spacetime leads the light path to bend significantly. We further divide the near field into outside region and inside region separated by an radius of $k m$, in which $k$ is an coefficient and $m$ is the mass of the black hole. In outside region, we simplify the condition into classical Newtonian mechanics, where the geodesic represents the trajectory of a particle with initial velocity, attracted to the black hole's center. We demonstrate that with this simplification, the distance $ l^{\text{straight}} $ is always less than or equal to  $l^{\text{geodesic}}$, where $ l^{\text{straight}} $ and $ l^{\text{geodesic}} $ represent the distances from the current position to the next intersection with the proximal boundary along the straight line and the geodesic path, respectively (refer to Appendix C). This approach ensures that points outside the proximal radius are not selected, thereby maintaining neural stability. We also prove that by iterating over the geodesic in this manner, the distance between the ray point and the boundary converges to $0$. In a manner akin to far-field conditions, we choose the proximal boundary's radius such that $ R^{\text{bh}} < R^{\text{prox}} < R^{\text{bh}} + \epsilon $. We choose $\Delta\lambda$ as $l^{\text{straight}}$ as well. In inside region, the geodesic is more complex, and we simply pick $\Delta\lambda$ as the distance from the current position to the in-black-hole boundary.

A ray is terminated once it falls into a black hole or reaching the sky sphere.

\begin{figure*}[htp]  
    \centering
    \begin{subfigure}[b]{\columnwidth}  
        \centering
        \includegraphics[width=\textwidth]{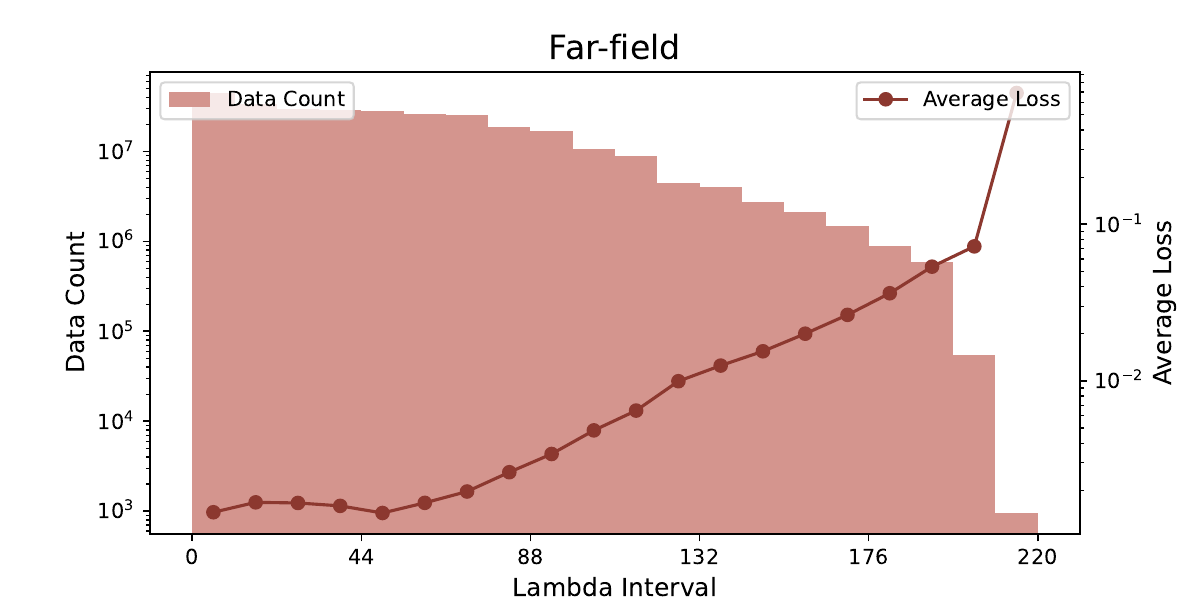}  
        \caption{}  
        \label{fig:farloss}
    \end{subfigure}
    \begin{subfigure}[b]{\columnwidth}  
        \centering
        \includegraphics[width=\textwidth]{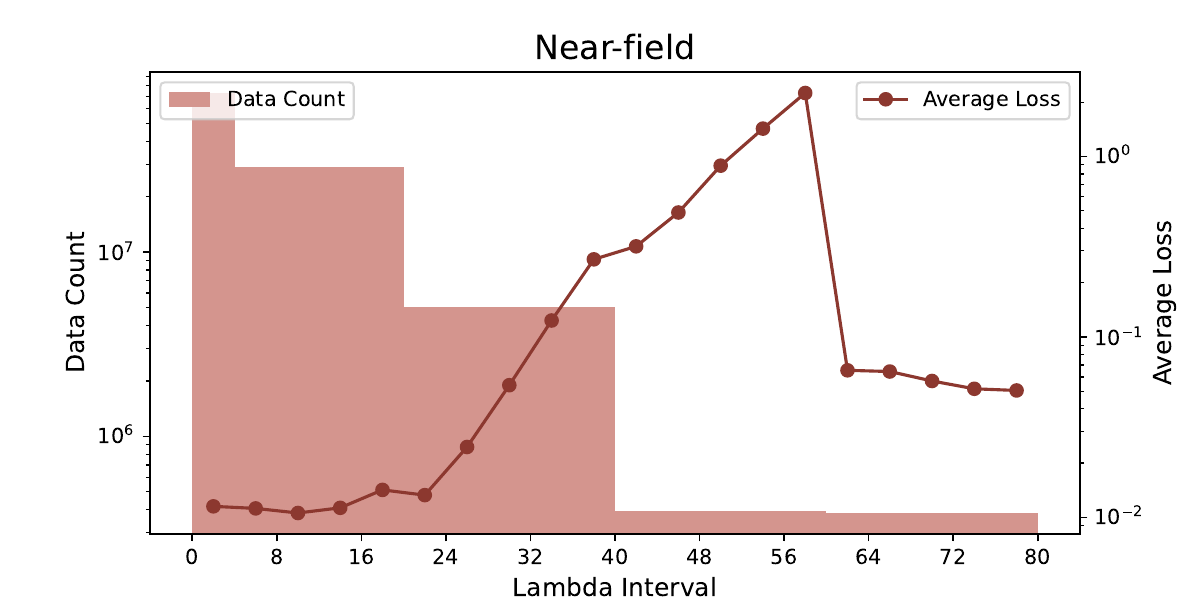}  
        \caption{}
        \label{fig:nearloss}
    \end{subfigure}
    \vspace{-10pt}
    \caption{The loss and data count distribution of near-field and far-field MLPs in 2-black-hole system.}
    \label{fig:loss_count}\vspace{-10pt}
\end{figure*}

\begin{figure*}[htp]  
    \centering
    \begin{subfigure}[b]{0.32\textwidth}  
        \centering
        \includegraphics[width=1\textwidth]{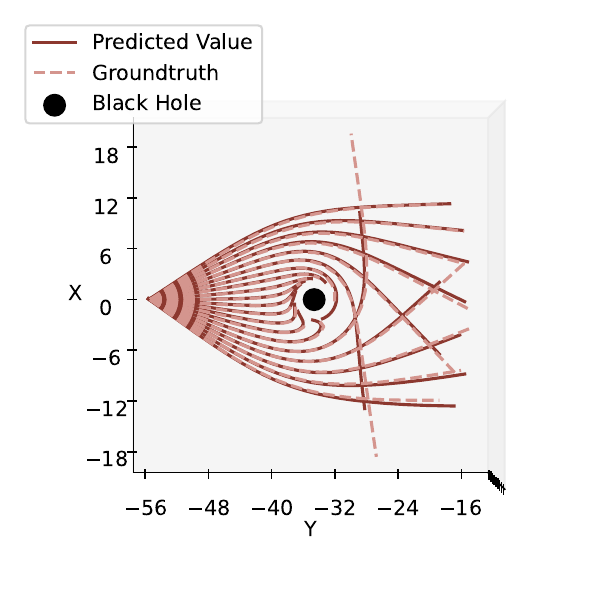}  
        \caption{}  
        \label{fig:neartraj1}
    \end{subfigure}
    \begin{subfigure}[b]{0.32\textwidth}  
        \centering
        \includegraphics[width=1\textwidth]{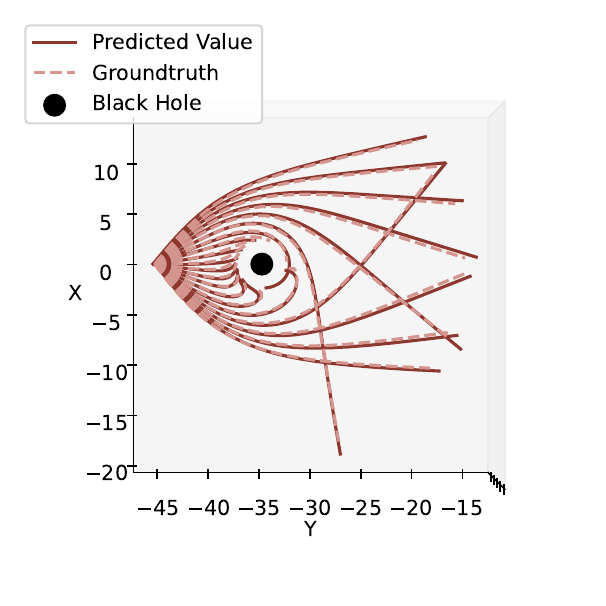}  
        \caption{}
        \label{fig:neartraj2}
    \end{subfigure}
    \begin{subfigure}[b]{0.35\textwidth}  
        \centering
        \includegraphics[width=1\textwidth]{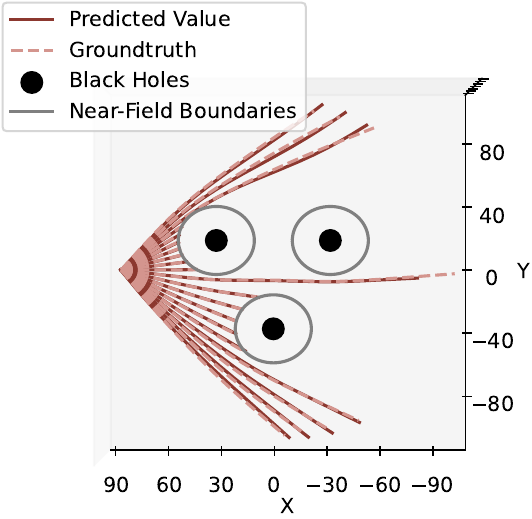}  
        \caption{}
        \label{fig:fartraj}
    \end{subfigure}
    \vspace{-20pt}
    \caption{The predicted trajectory of a light ray compared with the ground truth within near-field region \textbf{(a,b)} and far-filed region \textbf{(c)}.}\vspace{-15pt}
    \label{fig:traj}
\end{figure*}
\vspace{-6pt}

\paragraph{Color Points Sampling} 
After obtaining the ray segments, a natural idea is to sample color points within each ray segment. In our condition, the color is concentrated on the accretion disk and the sky sphere, and we record the sky sphere positions in the Ray Segments Identification once reached. Thus, we focus on sampling points on the accretion disk along geodesics. Since the accretion disk is smaller than the near-field region, we use the corresponding region's MLP to sample points. We first uniformly sample $N_{\text{coarse}}$ points for each ray segment within all near-field regions. Then, for each pair of neighboring points, we check whether the line connecting them intersects the accretion disk. For any such segment, we further sample $N_{\text{fine}}$ points along the line and identify the exact intersection with the accretion disk. Finally, we solve for the intersection point and calculate the color accordingly.
\vspace{-4pt}
\paragraph{Volume Rendering} 
We collect the colors of all points similar to the volume rendering method described in \cref{sec:bh_metric}, setting \(\Delta\lambda = 1\) due to the sparse distribution of the sampled points.

\renewcommand{\dblfloatpagefraction}{.9}

\vspace{-6pt}
\section{Experiments}

In this section we empirically evaluate \texttt{GravLensX} to show its superiority in rendering black holes for both effectiveness and efficiency. In our experiments, we use the Kerr metric to describe the spacetime around the black hole, we begin with single black hole rendering and then extend to multiple black holes. We train a 2-black-hole system and a 3-black-hole system in our experiments. For more experimental details please refer to Appendix A. For each system, we collect samples of 14,400,000 rays from each region, equivalent to nearly 7 images at a resolution of $1920 \times 1080$ for each region. During inference the velocity is calculated by finite difference for faster speed.

\begin{figure*}[htp]  
    \centering
    \includegraphics[width=\textwidth]{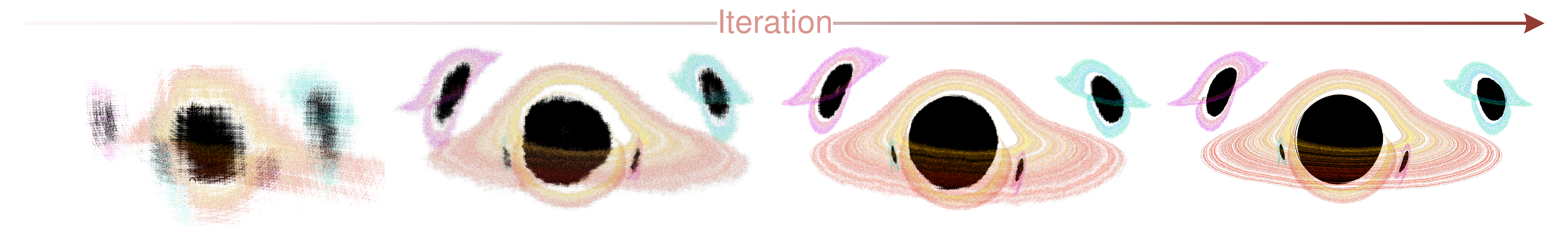}
    \vspace{-20pt}
    \caption{The rendering results of models from epochs $1$, $5$, $10$, $20$, respectively.}
    \vspace{-8pt}
    \label{fig:epoch}
\end{figure*}

\begin{figure*}[htp]  
    \centering
    \includegraphics[width=\textwidth]{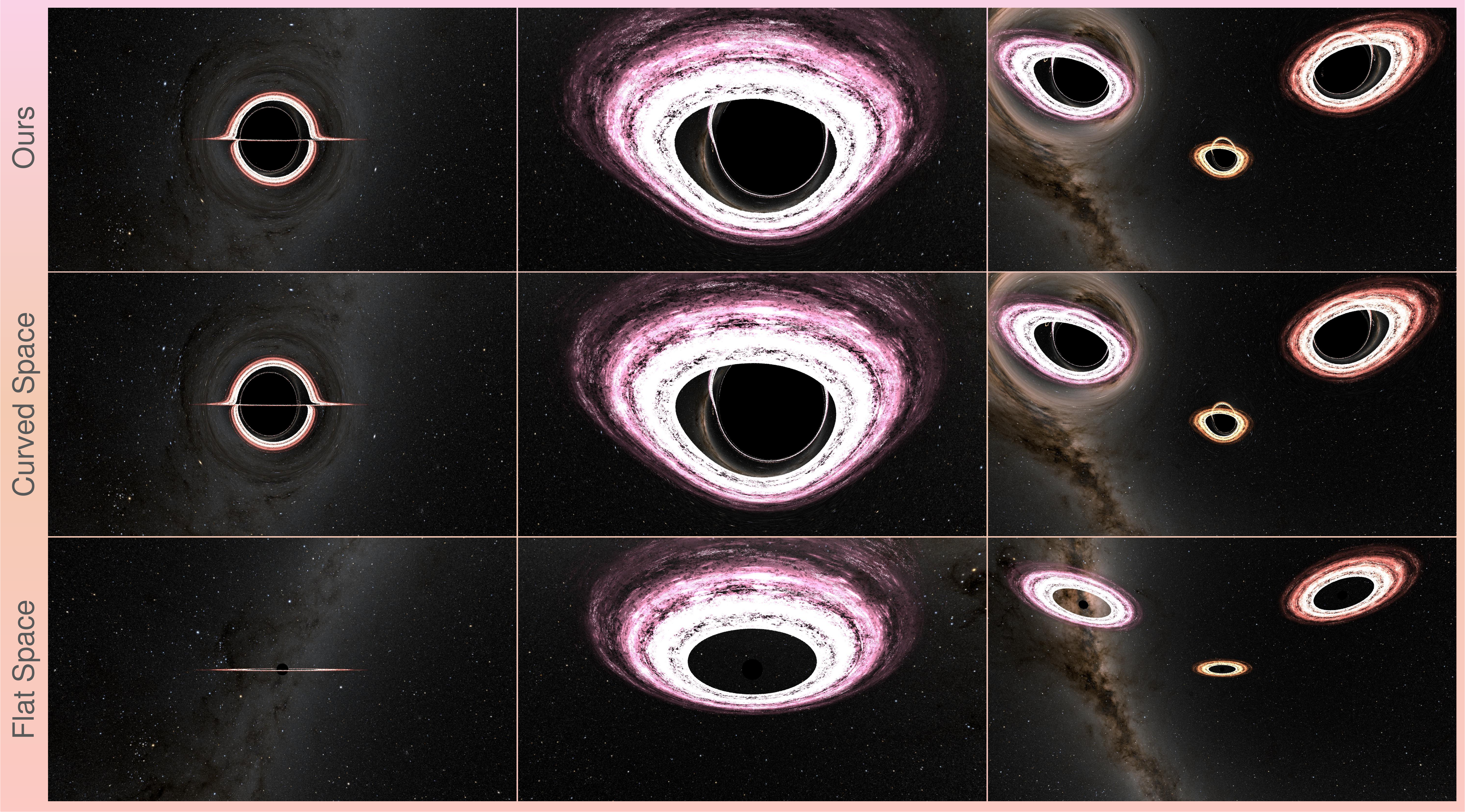}
    \vspace{-20pt}
    \caption{The rendering results including both accretion disks and sky spheres.}
    \vspace{-18pt}
    \label{fig:full_render}
\end{figure*}

\vspace{-6pt}
\subsection{Geodesic Approximation}
\subsubsection{Loss Distribution}
First we evaluate the performance of our trained neural networks compared to the ground truth on the null geodesic paths. The mean square error is leveraged as the metric.

\begin{table}
    \centering
    \caption{Quantitative comparisons between our rendering results and the Euler rendering results.}
    \label{tab:quantitative}
    \vspace{-4pt}
    \resizebox{\columnwidth}{!}{
    \begin{tabular}{cccccc} 
    \toprule\midrule
    \multirow{2}{*}{Accretion Disk} & \multirow{2}{*}{Sky Sphere} & \multicolumn{2}{c}{2 Black Holes} & \multicolumn{2}{c}{3 Black Holes}  \\ 
    \cmidrule{3-6}
                               &                             & PSNR$\uparrow$  & LPIPS$\downarrow$                     & PSNR$\uparrow$  & LPIPS$\downarrow$                      \\ 
    \midrule
    \checkmark                       & \checkmark                        & 19.47 & 0.168                     & 20.52 & 0.143                      \\
                               & \checkmark                        & 24.65 & 0.161                     & 24.98 & 0.133                      \\
    \checkmark                       &                             & 20.12 & 0.057                     & 21.70 & 0.042                      \\
    \midrule\bottomrule
    \end{tabular}
    }\vspace{-4pt}
\end{table}

As shown in \cref{fig:loss_count}, The distribution of loss and data counts are quite different for near-field MLPs and far-field MLPs. The loss of far-field MLPs are generally smaller than the loss of near-field MLPs. This is well aligned with the fact that the spacetime curvature is more significant in the near-field region, leading to more complex geodesics. Regarding the data counts, it decrease as the lambda increases for both regions, and as the data count increases, the loss tends to decrease. There's another phenomenon in specifically near-field region that when the lambda exceeds $40$, the distribution of data is quite uniform for each interval, and would last until very large lamda of around 500. This is attributed to the fact that light rays would continue to travel around the center of the black hole for long very long distances until they reach the inner boundary of the near-field region. This kind of data does not actually contributes to the rendering, and a careful selection of the inner boundary would reduce this kind of data.

\vspace{-6pt}
\subsubsection{Geodesic Visualization}
Here we compare the geodesic predicted by our MLP with the ground truth. We visualize the null geodesics of light rays casted from certain view point in the 3-black-hole system for near-field region and far-field region. As can be seen in the \cref{fig:traj}, the predicted trajectory of the light ray is quite close to the ground truth, indicating the effectiveness of our MLP in approximating the geodesic. For both systems, the error increase as the $\lambda$ increases, which is consistent with the loss distribution in \cref{fig:loss_count}. As the black hole is rotating, even the directions we sample are symmetric, the paths of the light rays are not symmetric, especially obvious in \cref{fig:fartraj}. Overall, the error of the geodesic approximation is quite small, demonstrating the effectiveness of our method. Additionally, we could see that even in the far field the light rays exhibit a slight but still noticeable curvature, highlighting the necessity of the far-field network.

\begin{figure*}[htp]  
    \centering
    \includegraphics[width=\textwidth]{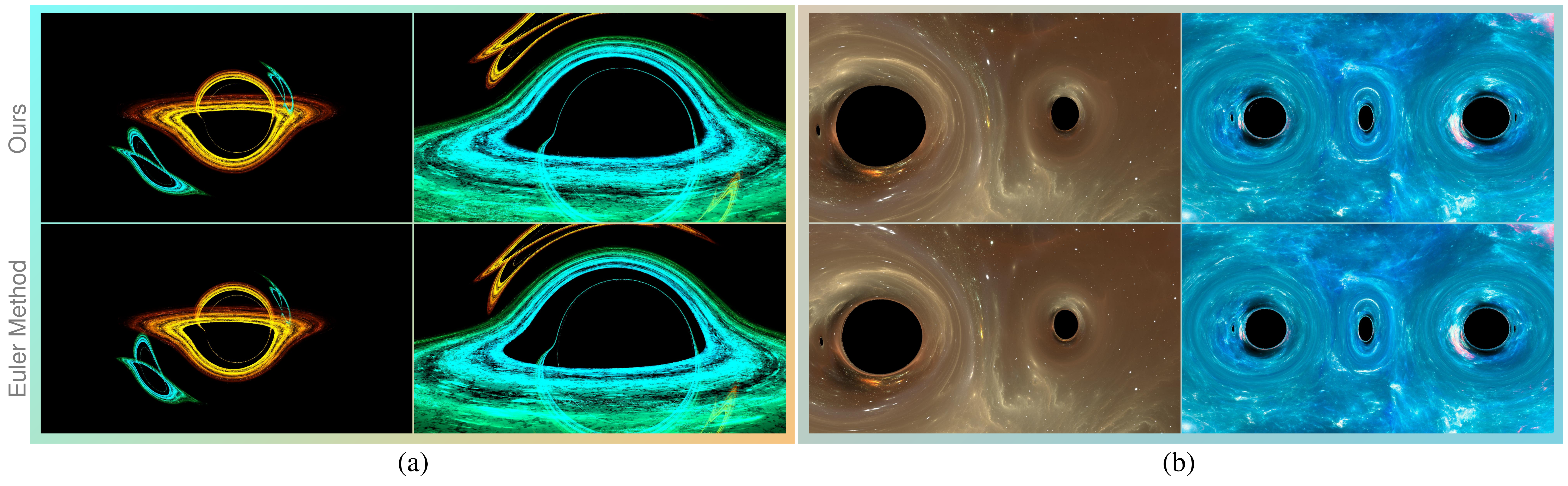}
    \vspace{-20pt}
    \caption{The separete rendering results of \textbf{(a)} accretion disk and \textbf{(b)} sky sphere.}
    \vspace{-10pt}
    \label{fig:separate_render}
\end{figure*}
\begin{figure*}[htp]  
    \centering
    \begin{subfigure}[b]{0.45\textwidth}  
        \centering
        \includegraphics[width=1\textwidth]{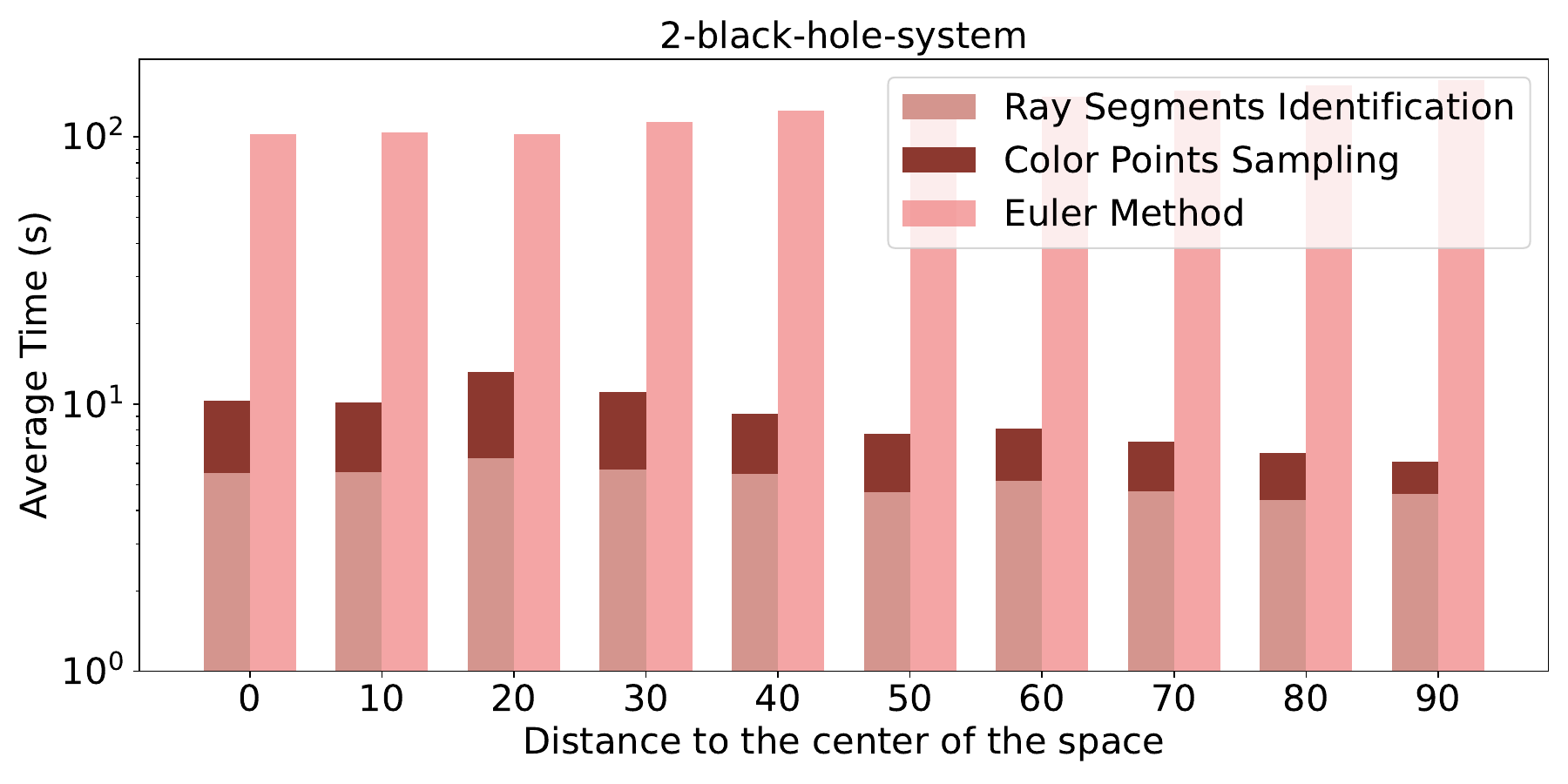}  
        \caption{}  
        \label{fig:2bhtime}
    \end{subfigure}
    \hspace{0.05\textwidth}  
    \begin{subfigure}[b]{0.45\textwidth}  
        \centering
        \includegraphics[width=1\textwidth]{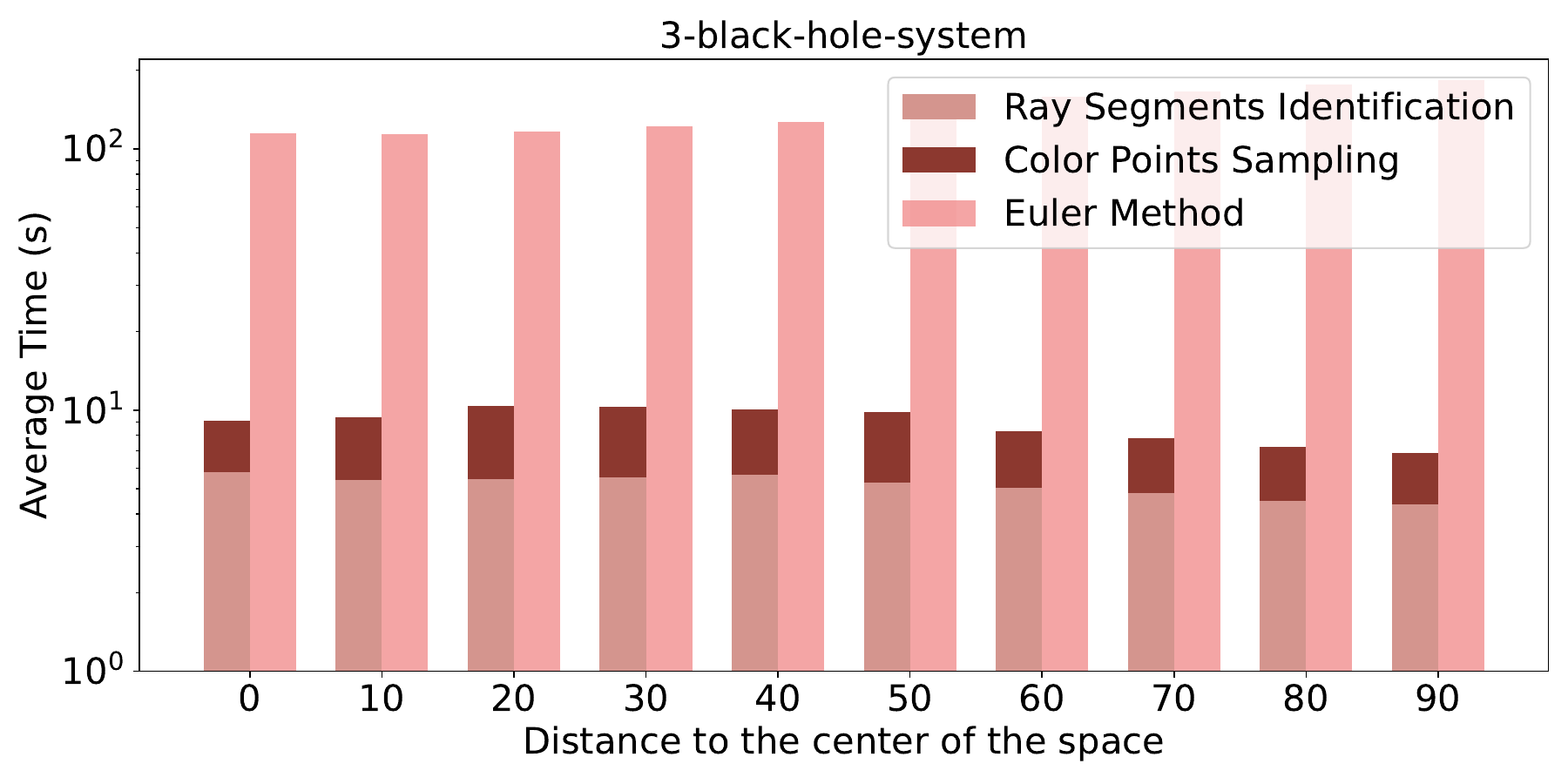}  
        \caption{}
        \label{fig:3bhtime}
    \end{subfigure}
    \vspace{-6pt}\caption{The average rendering time under different center distances for \textbf{(a)} 2-black-hole system and \textbf{(b)} 3-black-hole system.}\vspace{-6pt}
    \label{fig:time}
    \vspace{-6pt}
\end{figure*}

\vspace{-4pt}
\subsection{Rendering Results}
\vspace{-2pt}
\subsubsection{Convergence Analysis}
As shown in \cref{fig:epoch}, we visualize the rendering results of models from different training stages. It can be seen that the result of the first epoch yields a very blurry image, but the black holes and their accretion disks are still recognizable. There's no more difference between the contour of the black holes of epoch 5 to the that of later epochs, indicating that the Far-field MLP has already converged well. However the near-field MLPs need more epochs to converge, since there are still noisy color pixels around the accretion disk in the result of epoch 10.
\vspace{-8pt}
\subsubsection{Quantitative Results}
\label{sec:quantitative}
We randomly select 100 points across the entire space, and for each one, we set the viewing direction to point toward the center of a black hole. Then we render the image based on \texttt{GravLensX} along with the Euler method (ground truth) and compare their image similarity, considering different color parts: accretion disk and sky sphere. The results are shown in \cref{tab:quantitative}. Since the error of ray would cumulate as the ray goes across different regions, causing the pixels to deviate from their original positions, leading to a relatively low PSNR. However, the LPIPS of our method is quite small, implying that the images of our results are perceptually very similar to the ground truth.

\vspace{-4pt}
\subsubsection{Qualitative Results}
The comprehensive rendering results of 3-black-hole sytstem are shown in \cref{fig:full_render}. Black holes are rendered with high fidelity, and the accretion disk is clearly visible. Compared to the flat space where gravitational lensing does not exists, in curved spacetime the light rays are bent by the black holes, leading to clear gravity lensing effect in the images. We also find an interesting fact from the first column that a rotating black hole's visual surface is an ellipse shifted horizontally from its real position. This aligns well to the results presented by \citet{bohn2015does}. We also present individual rendering results for the accretion disk and sky sphere in \cref{fig:separate_render}. The differences between the image pairs, regarding the accretion disk, the sky sphere, and the shadow of the black holes, are almost indistinguishable to the naked eye. We encourage readers to view the supplementary video for additional qualitative insights.

\vspace{-4pt}
\subsection{Efficiency Analysis}

Here we compare the efficiency of \texttt{GravLensX} with the traditional Euler method. We measure the time cost of rendering a single iamge with resolution $1920\times 1080$, and we consider rendering images from different viewpoints belonging to different regions in the space. We generate 10 groups of points whose distances to the center of the space vary from 0 to 90 in dimensionless geometric units with $G=c=M=1$, and we obtain the directions using the same rule in \cref{sec:quantitative}. We  Then we record the rendering time of these points. As demonstrated in \cref{fig:time}, our approach significantly outperforms the traditional method in terms of speed, with the performance advantage becoming increasingly significant as the distance of the view point to the center grows. On average, \texttt{GravLensX} delivers a rendering speed $15\times$ faster when accounting for both the accretion disk and the sky sphere, and $26\times$ faster when focusing solely on the sky sphere. Moreover, our method is compatible with a variety of point sampling techniques—such as hierarchical schemes, empty-space skipping, proposal networks, and curvature-based heuristics—underscoring its capacity to deliver even faster rendering performance.
\vspace{-6pt}
\section{Conclusion}
\vspace{-2pt}
In this study, we have successfully demonstrated the application of neural networks for the rendering of black holes in curved spacetime, a process traditionally constrained by high computational demands. Our approach not only leverages the inherent physical laws of general relativity but also significantly optimizes the computational efficiency of simulating gravitational lensing effects.

Experimental validations demonstrate that our learning-based approach effectively approximates the geodesic paths of light in both near-field and far-field scenarios, producing high-quality black hole images in significantly less time than traditional methods. This innovation could have a profound impact on the study of black holes and other gravitational phenomena, providing researchers with a powerful tool for visualizing and analyzing these phenomena in unprecedented detail.

\section{Acknowledgement}
We thank William Throwe for his insightful discussion and the anonymous reviewers for their valuable feedback. This work was supported in part by the National Natural Science Foundation of China under Grants 62073066, and in part by 111 Project under Grant B16009.

{
    \small
    \bibliographystyle{ieeenat_fullname}
    \bibliography{main}
}
\appendix

\setcounter{table}{0}   
\setcounter{figure}{0}
\setcounter{section}{0}
\setcounter{equation}{0}
\renewcommand{\thetable}{A\arabic{table}}
\renewcommand{\thefigure}{A\arabic{figure}}
\renewcommand{\thesection}{A\arabic{section}}
\renewcommand{\theequation}{A\arabic{equation}}

\onecolumn
{\centering
  \LARGE\bfseries Appendix\par
}
\section{Additional Experiments}
\begin{figure}[htp]  
    \centering
    \begin{subfigure}[b]{0.39\columnwidth}  
        \centering
        \includegraphics[width=1\textwidth]{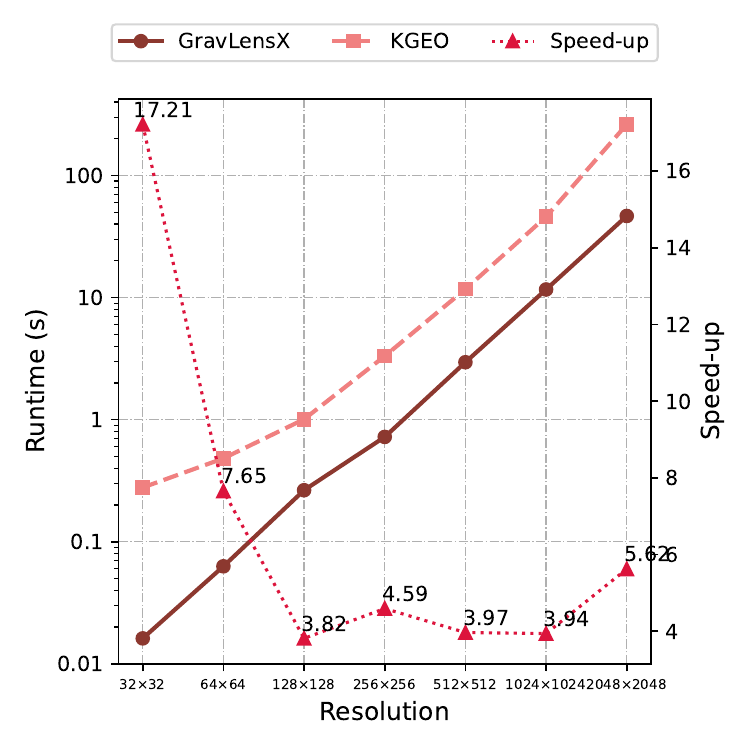}  
    \end{subfigure}
    \begin{subfigure}[b]{0.39\columnwidth}  
        \centering
        \includegraphics[width=1\textwidth]{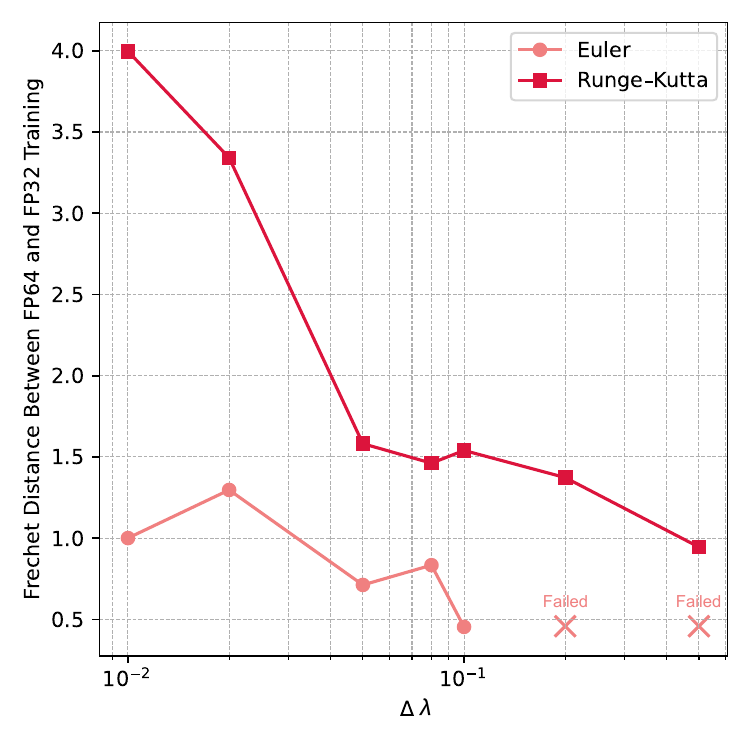}  
    \end{subfigure}
    \caption{\textbf{(Left)} Runtime comparison between kgeo and GravLensX. 100 points are sampled for each ray. \textbf{(Right)} Frechet distances between paths using FP32 and FP64.}
    \label{fig:kgeo}
\end{figure}

kgeo is an implementation of \citet{gralla2020null}'s work that provides a fast and accurate solution to the null geodesic equations in Kerr spacetime with elliptic integrals. For single-black-hole scenario, we benchmarked kgeo on a Xeon Gold 5218 CPU by launching 16 parallel processes to fully utilize all cores, and compared this to \texttt{GravLensX} on a single NVIDIA RTX 3090 GPU. \texttt{GravLensX} yields speed-ups of roughly $3.8\times$ to $17.2\times$ (see \cref{fig:kgeo} Left). More importantly, for the multi-black-hole scenarios we consider in our paper leveraging the superposed Kerr metric, \citet{gralla2020null}'s theory cannot be utilized.
\section{Implementation Details}
\label{sec:details}
We set $N_{\text{coarse}}=N_{\text{fine}}=10$. $R^{bh}$ is set as $20$. Detailed parameters for generating data are listed in \cref{tab:param}. Taichi~\citep{hu2019taichi} and PyTorch~\citep{Ansel_PyTorch_2_Faster_2024} are used to render the black hole systems and train the neural networks.
\begin{table}[h]
\centering
\caption{Parameters of generating geodesic data.}
\label{tab:param}
\begin{tabular}{ccccccc} 
\toprule
System        & $P^{\text{bh}}$   & $M^{\text{bh}}$ & $A^{\text{bh}}$ & $l^{\text{in}}$ & $l^{\text{out}}$ & in-black-hole radius \\
\midrule
2 Black Holes & (-30, 0, 0), (30, 0, 0)   & 1, 1    & 1, 1         & 1.6      & 100  & 1.8  \\                
3 Black Holes & $(30, 10\sqrt{3}, 0), (-30, 10\sqrt{3}, 0), (0, -20\sqrt{3}, 0)$ & 1, 1, 1              & 1, 1, -1             & 2.2                  & 100 &  2.25  \\                 
\bottomrule
\end{tabular}
\end{table}

As for the training process, we use MLPs with 12 hidden layers and residual connections, employing SoftPlus as the activation function. For the near field network, each layer contains 200 neurons. For the far field network, each layer has 128 neurons in the 2-black-hole system and 200 neurons in the 3-black-hole system. The learning rate is set as $0.001$ with default Adam optimizer. We trained the far-field network on 4 RTX 3090 GPUs and the near-field network on single RTX 3090. The time for generating data and training the model on RTX 3090 is around 13.8 GPU hours for every near-field NN and 33.5 GPU hours for the far-field NN. For the 3-black-hole scenario, this is equivalent to rendering a 63 minutes video in 30 FPS with Euler method. For more complex systems, the training time grows linearly with the number of black holes. We did not investigate much on the training acceleration, and we believe the training time can be largely reduced by using more advanced optimizers, model structures, etc. 
\section{Superposed Black Hole Metric}
\label{sec:metric}
Black hole metric, e.g. Schwarzschild metric~\citep{schwarzschild1916gravitationsfeld}, is the solution to Einstein field equations under certain assumptions. We primarily utilize the Kerr metric~\citep{kerr1963gravitational} that describes a rotating black hole to characterize the black hole system, as adopted by earlier studies~\citep{james2015gravitational, bohn2015does, porter2024parameter}. Kerr metric in Cartesian coordinates is defined as 
\begin{equation}
    \label{eq:kerr_metric}
    \mathrm{d} s^2=  -\mathrm{d} t^2+\mathrm{d} x^2+\mathrm{d} y^2+\mathrm{d} z^2 +\frac{2 m r^3}{r^4+a^2 z^2} \left(\mathrm{~d} t+\frac{r(x \mathrm{~d} x+y \mathrm{~d} y)}{a^2+r^2}+\frac{a(y \mathrm{~d} x-x \mathrm{~d} y)}{a^2+r^2}+\frac{z}{r} \mathrm{~d} z\right)^2,
\end{equation}
where the Boyer-Lindquist radius $r$ is implicitly given by
\begin{equation}
    \label{eq:r}
    x^2+y^2+z^2=r^2+a^2\left(1-\frac{z^2}{r^2}\right).
\end{equation}
$m$ and $a$ are the mass and the angular momentum of the black hole, respectively. $t$, $x$, $y$, and $z$ are the coordinates in the spacetime. $ds$ represents an infinitesimal spacetime interval, which captures the separation between two nearby events in spacetime. With \cref{eq:kerr_metric}, The metric tensor for these four coordinates can be calculated as
\begin{equation}
    g_{a b}=\eta_{a b}+\frac{2 m r^3}{r^4+a^2 z^2} \ell_a \ell_b,
\end{equation}
with
\begin{equation}
    \label{eq:ell}
    \ell=\left[1, \frac{r x+a y}{r^2+a^2}, \frac{r y-a x}{r^2+a^2}, \frac{z}{r}\right]^\mathsf T,
\end{equation}
where $\eta_{a b}$ is the Minkowski metric tensor. We use an affine parameter $\lambda$ to parameterize the geodesic as done by \citet{d2018electromagnetic} and \citet{porter2024parameter}. Let $p=[t, x, y, z]^{\mathsf{T}}$ denote the spacetime coordinates. The geodesic equation can be written as
\begin{equation}
    \label{eq:geodesic}
    \frac{\mathrm{d}^2 p^\mu}{\mathrm{d} \lambda^2}+\Gamma_{\alpha \beta}^\mu \frac{\mathrm{d} p^\alpha}{\mathrm{d} \lambda} \frac{\mathrm{d} p^\beta}{\mathrm{d} \lambda}=0
\end{equation}
in which $\lambda$ is the affine parameter, ensuring the path is parameterized in a way that preserves the affine properties of the curve. $\Gamma_{\alpha \beta}^\mu$ is the Christoffel symbol describing how coordinates change in curved spacetime, which can be calculated from the metric tensor as
\begin{equation}
    \Gamma_{\alpha \beta}^\mu=\frac{1}{2} g^{\mu \nu}\left(\frac{\partial g_{\nu \alpha}}{\partial p^\beta}+\frac{\partial g_{\nu \beta}}{\partial p^\alpha}-\frac{\partial g_{\alpha \beta}}{\partial p^\nu}\right),
\end{equation}
where $g^{\mu \nu}$ is the inverse of the metric tensor, satisfying
\begin{equation}
  g^{\mu\nu} g_{\nu\alpha} = \delta^\mu_\alpha= 
  \begin{cases} 
  1 & \text{if } \mu = \alpha \\ 
  0 & \text{if } \mu \neq \alpha 
  \end{cases}
\end{equation}  
Calculating the partial derivative term in the Christoffel symbol is the most computationally expensive part in solving the geodesic equation. In order to calculate the second order derivative of each coordinate w.r.t. $\lambda$ in \cref{eq:geodesic}, we need to obtain the derivative of the time coordinate w.r.t. $\lambda$, $\frac{\mathrm{d} t}{\mathrm{d} \lambda}$. According to the general relativity, light travels along null geodesics, indicating that the derivatives of the four coordinates $p$ satisfy
\begin{equation}
    \label{eq:null_geodesic}
    g_{\mu \nu} \frac{\mathrm{d} p^\mu}{\mathrm{d} \lambda} \frac{\mathrm{d} p^\nu}{\mathrm{d} \lambda}=0.
\end{equation}
Solving this equation yields $\frac{\mathrm{d} t}{\mathrm{d} \lambda}$. By discretizing the $\lambda$ and iterately solve the spacetime coordinate at each lambda step.

\section{Proof for $l^{\text{straight}} \leq l^{\text{geodesic}}$ and Its Convergence}
\label{sec:proof}
\begin{figure}[htp]  
    \centering
    \begin{subfigure}[b]{0.25\textwidth}  
        \centering
        \includegraphics[width=1\textwidth]{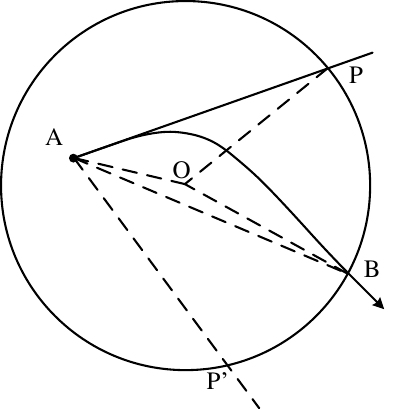}  
        \caption{}
        \label{fig:lambda_inner}
    \end{subfigure}
    \hspace{0.1\textwidth}
    \begin{subfigure}[b]{0.25\textwidth}  
        \centering
        \includegraphics[width=1\textwidth]{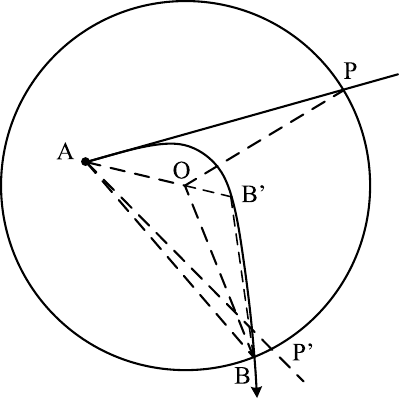}  
        \caption{}  
        \label{fig:lambda_outer}
    \end{subfigure}
    \hspace{0.1\textwidth}
    \begin{subfigure}[b]{0.25\textwidth}  
        \centering
        \includegraphics[width=1\textwidth]{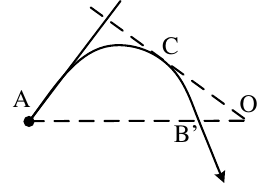}  
        \caption{}  
        \label{fig:lambda_false}
    \end{subfigure}
    \caption{Different cases of intersections.}
\end{figure}

Suppose a ray is casted from point $A$ to point $B$ in the near-field region with radius $R$ centered at $O$, where $B$ is its first crossing point to the boundary. The nonzero acceleration $\mathrm{d}^2r/\mathrm{d}\lambda^2 < 0$ caused by gravity bends the ray towards the center. $d$ is the initial direction of the ray and $P$ is the point where the straight line from position $P$ with direction $d$ intersects the boundary.
\begin{proposition}
\label{prop:bound}
Let $L_{AB}$ be the length of the ray from point $A$ to point $B$. The following inequality holds:
\begin{equation}
    L_{AB} \geq |AP|.
\end{equation} 
\end{proposition}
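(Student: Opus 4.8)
The plan is to compare the two paths through their radial coordinate measured from the centre $O$. Place $O$ at the origin and parametrise both the straight line and the geodesic by the unit-speed affine parameter $\lambda$ (recall the speed is normalised to $1$ in \cref{sec:bh_metric}), so that arc length along either curve equals $\lambda$. Write $\mathbf{x}(\lambda)$ for the spatial position and $r(\lambda)=\|\mathbf{x}(\lambda)\|$ for the radius. The essential observation is that both curves leave $A$ with the same position and the same unit tangent $d$, so they share the initial data
\begin{equation}
    r(0)=|OA|, \qquad r'(0)=\tfrac{\mathbf{x}(0)}{|OA|}\cdot d .
\end{equation}
The entire argument then reduces to controlling the sign of $r''$ along each path.

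For any unit-speed curve a direct differentiation of $r^2=\mathbf{x}\cdot\mathbf{x}$ gives
\begin{equation}
    r'' = \frac{1-(r')^2+\mathbf{x}\cdot\mathbf{x}''}{r}.
\end{equation}
For the straight line $\mathbf{x}''=0$ and $|r'|=|\hat{\mathbf{x}}\cdot\mathbf{x}'|\le\|\mathbf{x}'\|=1$, hence $r''_{\text{straight}}\ge 0$: the straight-line radial profile is convex. For the geodesic, the Newtonian simplification in the outside region supplies precisely the input flagged in the figure, namely that gravity bends the ray inward so that $r''_{\text{geo}}=\mathrm{d}^2 r/\mathrm{d}\lambda^2<0$; its radial profile is therefore concave.

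The comparison is then closed by introducing $h(\lambda)=r_{\text{straight}}(\lambda)-r_{\text{geo}}(\lambda)$. The matched initial data give $h(0)=0$ and $h'(0)=0$, while $h''=r''_{\text{straight}}-r''_{\text{geo}}\ge 0$. Thus $h$ is convex with a double zero at $\lambda=0$, which forces $h(\lambda)\ge 0$, i.e. $r_{\text{geo}}(\lambda)\le r_{\text{straight}}(\lambda)$ for all $\lambda\ge 0$. Since $P$ is the \emph{first} crossing of the straight line with the boundary sphere, $r_{\text{straight}}(\lambda)<R$ for $0\le\lambda<|AP|$, whence $r_{\text{geo}}(\lambda)<R$ on the same interval. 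The geodesic cannot reach radius $R$ before $\lambda=|AP|$, so its first boundary crossing $B$ satisfies $L_{AB}\ge|AP|$, as claimed.

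The identity for $r''$ and the double-integration (convexity) step are routine. The step I expect to be the real crux is establishing that the geodesic's radial profile is genuinely concave on the \emph{whole} segment from $A$ to $B$, i.e. that $\mathrm{d}^2 r/\mathrm{d}\lambda^2<0$ holds uniformly rather than only at $A$; this is exactly where the restriction to the outside region ($r>km$) and the central-force model must be invoked, and I would make explicit the condition on the launch angle under which the inward pull dominates the centrifugal term for every $r$ between $|OA|$ and $R$. A secondary point needing clean handling is the degenerate purely radial launch (where the two profiles momentarily coincide) and the bookkeeping guaranteeing the geodesic does not leave and re-enter the ball before $B$.
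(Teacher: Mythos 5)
Your argument is internally coherent and genuinely different from the paper's. Granting the pointwise concavity $\mathrm{d}^2 r_{\mathrm{geo}}/\mathrm{d}\lambda^2\le 0$ along the whole segment, your chain is correct: the identity $r''=\bigl(1-(r')^2+\mathbf{x}\cdot\mathbf{x}''\bigr)/r$, the convexity of the straight line's radial profile, and the double-zero comparison for $h$ give $r_{\mathrm{geo}}(\lambda)\le r_{\mathrm{straight}}(\lambda)$ at equal arc length, hence $L_{AB}\ge|AP|$; your argument even disposes of the leave-and-re-enter worry automatically. The paper's proof is instead planar geometry in the plane through $A$, $O$, $B$, split into two cases: when $B$ lies on the far arc $\wideparen{PP'}$ it uses the law of cosines ($\angle AOB\ge\angle AOP$ gives $|AB|\ge|AP|$, and $L_{AB}\ge|AB|$); when the ray wraps around, a tangency argument exploiting only the \emph{direction} of the acceleration shows the path must first cross the extension of $AO$ beyond $O$, after which triangle inequalities give $L_{AB}\ge|AO|+R\ge|AP|$.

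The gap is exactly the one you flagged, and it is not a technicality you can discharge with a launch-angle condition; it is the reason the paper's proof takes the geometric route. The Newtonian simplification supplies a \emph{central} acceleration (a vector pointing at $O$), not concavity of $r(\lambda)$. Concavity is in fact impossible for any ray whose radius first decreases and then increases before exiting: at the perihelion $r'=0$ and $r''\ge 0$ by elementary calculus, whatever the force law, because the centrifugal contribution $\bigl(1-(r')^2\bigr)/r$ dominates there. Such in-and-out rays are precisely the generic lensing rays this proposition must cover (and for which its conclusion still holds), so your inequality $h''\ge 0$ fails on the important cases, and restricting launch angles so that "the inward pull dominates the centrifugal term" would shrink the proposition's scope rather than prove it. Note that the paper's proof never invokes radial concavity: Case 1 needs only that the ray's swept angle at exit is at least the straight line's, which follows from centrality by a comparison at equal \emph{swept angle} rather than equal arc length (in polar form $u=1/r$, the orbit equation gives $u''_{\phi\phi}+u\ge 0$ for the ray versus $u''_{\phi\phi}+u=0$ for the line with matched initial data, a Sturm-type comparison that survives perihelion passage since $\angle AOP<\pi$); Case 2 needs only that nonzero angular momentum forbids a purely radial velocity. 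If you replace your equal-arc-length comparison by this equal-angle comparison, your strategy recovers Case 1; the wrap-around case still requires a separate argument such as the paper's.
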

\begin{proof}
Consider the plane expanded by points $A$, $O$, and $B$. Let $P'$ be the reflection of $P$ across line $AO$. 
\begin{description}
\item[Case 1: Point $B$ is on the arc $\wideparen{PP'}$] As shown in \cref{fig:lambda_inner}, through cosine law, we have
\begin{equation}
    \begin{aligned}
        |AP|&=R^2 + |AO|^2 - 2R |AO| \cos(\angle AOP),\\
        |AB|&=R^2 + |AO|^2 - 2R |AO| \cos(\angle AOB).\nonumber
    \end{aligned}
\end{equation}
Clearly, $\angle AOB \geq \angle AOP$, which implies $L_{AB} \geq |AB| \geq |AP|$, where the latter equality only holds when $B$ is on $P$ or $P'$.

\item[Case 2: Point $B$ is not on the arc $\wideparen{PP'}$]
As shown in \cref{fig:lambda_outer}, extending line $AO$ intersects the trajectory $AB$ firstly at point $B'$. Assume the crossing point $B'$ lies on $AO$ (see \cref{fig:lambda_false}). Draw line $OC$ through point $O$ such that $OC$ is tangent to curve $AB'$, with $C$ being the point of tangency. Denote the position and speed of the curve at point $C$ as $p_c =(x_c, y_c)^\mathsf{T}$, $v_c =(-k\cdot x_c, -k\cdot y_c)^\mathsf{T}$ in which $k$ is a constant value, as its direction is parallel to the tangent line across the coordinates origin. Also, the acceleration is parallel to the direction $v_c$, thus the trajectory of the ray after $C$ should be a straight line pointing towards $O$.
This contradicts our assumption that the curve would continue bending until it reaches $AO$ at $B'$, leading to the conclusion that $B'$ cannot lie on $AO$ and should instead intersect the extension of line $AO$ towards the O end. Now we have
\begin{equation}
    L_{AB} = L_{AB'} + L_{B'B} \geq |AO| + |OB'| + |B'B| \geq |AO| + |OB| \geq |AP|.\nonumber
\end{equation}
\end{description}
\end{proof}
In our method, we iteratively step over the ray path to conduct ray tracing. We denote the start point and direction as $A_0$ and $d_0$. At each step, we calculate the length of $A_0P_0$, where $P_0$ is the point of intersection between the boundary and the straight line cast from $A_0$ in the direction of $d_0$. Then we step with length $s_0=|A_0P_0|$ on the ray path and obtain next point $A_1$ and direction $d_1$. This iteration process is conducted until the point is sufficiently near to the boundary, i.e., sufficiently close to $B$.

\begin{proposition}
    Given a point on a ray path influenced by the gravity in the near-field, we have
    \begin{equation}
        \lim_{T\to \infty}\sum_{t=0}^T s_t = L_{A_0B}.
    \end{equation}
\end{proposition}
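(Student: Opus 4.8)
The plan is to track the \emph{remaining} geodesic arc length to the boundary and show that the iteration exhausts it exactly. First I would introduce the unit-speed parameterization $\gamma(\sigma)$ of the geodesic, with $\gamma(0)=A_0$ and $\gamma(L_{A_0B})=B$; because the velocity is normalized so that the affine parameter coincides with arc length (\cref{sec:bh_metric}), one step of size $\Delta\lambda=s_t=|A_tP_t|$ advances the current point by exactly $s_t$ along $\gamma$. I then define the remaining arc length $\ell_t:=L_{A_tB}$. Applying \cref{prop:bound} at the current point $A_t$ with direction $d_t$ gives $s_t=|A_tP_t|\le L_{A_tB}=\ell_t$, so a step never passes $B$; hence $A_{t+1}$ still lies on the segment $A_0B$ and the recursion $\ell_{t+1}=\ell_t-s_t$ holds.

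This recursion telescopes, which is the second step:
\begin{equation}
\sum_{t=0}^{T} s_t=\ell_0-\ell_{T+1}=L_{A_0B}-\ell_{T+1},
\end{equation}
so the claim reduces to showing $\ell_{T+1}\to 0$ as $T\to\infty$. Since $s_t\ge 0$, the sequence $\{\ell_t\}$ is nonincreasing and bounded below by $0$, hence it converges to some $\ell_\infty\ge 0$, and in particular $s_t=\ell_t-\ell_{t+1}\to 0$. All that remains is to rule out $\ell_\infty>0$.

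The hard part will be this non-stalling step. I would argue by contradiction: if $\ell_\infty>0$, continuity of the arc-length parameterization gives $A_t=\gamma(L_{A_0B}-\ell_t)\to A_\infty:=\gamma(L_{A_0B}-\ell_\infty)$ and $d_t\to d_\infty$. Because $B$ is the \emph{first} crossing of the boundary, every point of $\gamma$ strictly before $B$ lies in the open ball of radius $R$; as $\ell_\infty>0$ places $A_\infty$ at positive arc distance from $B$, the point $A_\infty$ is strictly interior, so the straight ray from $A_\infty$ along $d_\infty$ meets the sphere at a positive distance $s_\infty>0$. Since the map sending an interior point and a direction to this straight-line hitting distance is continuous, $s_t\to s_\infty>0$, contradicting $s_t\to 0$. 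Hence $\ell_\infty=0$, and letting $T\to\infty$ in the telescoped identity yields $\sum_{t=0}^{\infty}s_t=L_{A_0B}$.

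The obstacle is entirely concentrated in the final paragraph: the telescoping identity is immediate once the no-overshoot bound of \cref{prop:bound} is in hand, but the substance lies in proving that the accumulated steps cannot converge to an interior point, for which the strict interiority of $A_\infty$ (a consequence of $B$ being the first crossing) and the continuity of the straight-line hitting distance are the decisive ingredients.
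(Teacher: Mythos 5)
Your proposal is correct and follows essentially the same route as the paper's proof: bound each step by the remaining geodesic length via \cref{prop:bound}, deduce $s_t \to 0$ from the convergent (telescoping/bounded) partial sums, and then derive a contradiction by showing that if the iterates accumulated at an interior point $B'$ (your $A_\infty$), the straight-line hitting distance there would be a strictly positive limit of $s_t$. Your write-up is somewhat more explicit than the paper's on the two points it leaves implicit — the continuity of the hitting-distance map and the strict interiority of the limit point — but the underlying argument is the same.
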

\begin{proof}
    Through our definition, we have
    \begin{equation}
        s_t \geq 0, \qquad\forall t \in \mathbb{N}.\nonumber
    \end{equation}
    By \cref{prop:bound}, it holds that
    \begin{equation}
        \sum_{t=0}^{T}s_t \leq L_{A_0B} < +\infty, \qquad\forall T \in \mathbb{N},\nonumber
    \end{equation}
    indicating that $\sum_{t=0}^{T}s_t$ converges, and 
    \begin{equation}
    \lim_{t\to \infty} s_t=0. \nonumber
    \end{equation}
    Assume $r:=L_{A_0B} - \sum_{t=0}^{\infty}s_t > 0$, pick $B'$ on the path so that $L_{B'B}=r$. Clearly $A_t$ converges to $B'$ as $t$ increases, and it holds that
    \begin{equation}
        \lim_{t\to \infty} s_t = s_{B'},\nonumber
    \end{equation}
    in which $s_{B'}$ is the distance from $B'$ to the boundary across its tangent line. As $B'$ does not overlap with $B$, 
    \begin{equation}
        \lim_{t\to \infty} s_t = s_{B'} > 0,\nonumber
    \end{equation}
    this contradicts to the aforementioned condition, therefore $r>0$ does not hold, implying that
    \begin{equation}
        \sum_{t=0}^{\infty}s_t=L_{A_0B}.\nonumber
    \end{equation}
\end{proof}

\end{document}